\newcommand{\bra}[1]{\langle #1|}
\newcommand{\ket}[1]{|#1\rangle}
\newcommand{\braket}[2]{\langle #1|#2\rangle}
\DeclareMathOperator{\Tr}{Tr}
\newtheorem{theorem}{Theorem}
\newtheorem{remark}{Remark}
\begin{document}

\sloppy

\title{Lov{\'a}sz's Theta Function, R{\'e}nyi's Divergence\\ and the Sphere-Packing Bound}

\author{
  \IEEEauthorblockN{Marco Dalai}
  \IEEEauthorblockA{Department of Information Engineering\\
    University of Brescia - Italy\\
    Email: marco.dalai@ing.unibs.it} 
}



\maketitle

\begin{abstract}
Lov\'asz's bound to the capacity of a graph and the the sphere-packing bound to the probability of error in channel coding are given a unified presentation as information radii of the Csisz\'ar type using the R{\'e}nyi divergence in the classical-quantum setting. This brings together two results in coding theory that are usually considered as being of a very different nature, one being a ``combinatorial'' result and the other being ``probabilistic''. In the context of quantum information theory, this difference disappears.
\end{abstract}

\section{Introduction}
One of the central topics in coding theory is the problem of bounding the probability of error of optimal codes for communication over a given channel. Shannon \cite{shannon-1948} introduced the notion of channel capacity $C$, which  represents the largest rate at which information can be sent through the channel with probability of error that vanishes with increasing block-length. He then also introduced \cite{shannon-1956} the notion of zero-error capacity $C_0$ as the largest rate at which information can be sent with probability of error precisely equal to zero. For rates in the range $C_0<R<C$, the probability of error is known to decrease exponentially in the block-length $n$ as
\begin{equation}
P_e\approx e^{-n E(R)},
\end{equation}
where $E(R)$ is the so called reliability function of the channel. While in the the region of high rates the function $E(R)$ is known exactly, in the low rate region little is known about $P_e$; determining both $E(R)$ and $C_0$ is an unsolved problem and only upper and lower bounds for these quantities are known.

Two of the most important contributions to the study of $E(R)$ and of $C_0$, which came respectively in the '60s and in the '70s, are the sphere-packing bound $E(R)\leq E_{sp}(R)$ \cite{shannon-gallager-berlekamp-1967-1} and Lov{\'a}sz's bound $C_0\leq \vartheta$ \cite{lovasz-1979}. These two bounds are usually considered as being the result of totally unrelated methods. In this paper, we show that this is not the case, and that Lov{\'a}sz's result comes as a special case of the sphere-packing bound once we move to the more general context of classical-quantum channels. In order to do that, we extend to the classical-quantum case a result of Csisz\'ar that allows us to express the sphere-packing exponent \cite{dalai-ISIT-2012} in terms of an information radius using the R\'enyi divergence. Lov{\'a}sz's result then emerges naturally as a special case.
This leads to a unified view of two of the most important bounds to $E(R)$ and to $C_0$, 
showing that quantum information theory is a useful tool to attack problems at the intersection of probability and combinatorics in classical information theory.

\section{Classical Channels}

\subsection{Basic notations and definitions}
Let $W(x|y)$, $x\in \mathcal{X}$, $y\in\mathcal{Y}$, be the transition probabilities of a discrete memoryless channel $W :\mathcal{X}\to\mathcal{Y}$, where $\mathcal{X}$ and $\mathcal{Y}$ are finite sets. For a sequence $\mathbf{x}=(x_1,x_2,\ldots,x_n)\in\mathcal{X}^n$ and a sequence $\mathbf{y}=(y_1,y_2,\ldots,y_n)\in\mathcal{Y}^n$, the probability of observing $\mathbf{y}$ at the output of the channel given $\mathbf{x}$ at the input is 
\begin{equation}
W^{(n)}(\mathbf{y}|\mathbf{x})=\prod_{i=1}^n W(y_i|x_i).
\end{equation}
A block code with $M$ messages and block-length $n$ is a mapping from a set $\{1,2,\ldots,M\}$ of $M$ messages onto a set $\{\mathbf{x}_1, \mathbf{x}_2, \ldots, \mathbf{x}_M\}$ of $M$ sequences in $\mathcal{X}^n$. The rate $R$ of the code is defined as $R=\log M/n$.
A decoder is a mapping from $\mathcal{Y}^n$ into the set of possible messages $\{1,2,\ldots,M\}$. If message $m$ is to be sent, the encoder transmits  the codeword $\mathbf{x}_m$ through the channel. An output sequence $\mathbf{y}$ is received by the decoder, which maps it to a message $\hat{m}$. An error occurs if $\hat{m}\neq m$.

Let $Y_m\subseteq \mathcal{Y}^n$ be the set of output sequences that are mapped into message $m$. When message $m$ is sent, the probability of error is
\begin{equation}
P_{e|m}=\sum_{\mathbf{y}\notin Y_m} W^{(n)}(\mathbf{y}|\mathbf{x}_m).
\end{equation}
The maximum error probability of the code is defined as the largest $P_{e|m}$, that is,
\begin{equation}
P_{e,\max}=\max_{m}P_{e|m}.
\end{equation}

Let $P_{e,\max}^{(n)}(R)$ be the smallest maximum error probability among all codes of length $n$ and rate at least $R$.
Shannon's theorem \cite{shannon-1948} states that sequences of codes exists such that $P_{e,\max}^{(n)}(R)\to 0$ as $n\to\infty$ for all rates smaller than a constant $C$, called \emph{channel capacity}, which is given by the expression
\begin{equation}
C=\max_{P}\sum_{x,y}P(x)W(y|x)\log\frac{W(y|x)}{\sum_{x'} P(x') W(y|x')},
\end{equation} 
where the maximum is over all probability distributions on the input alphabet. 

For $R<C$, Shannon's theorem only asserts that $P_{e,\max}^{(n)}(R)\to 0$ as $n\to\infty$.  
For a range of rates  $C_0\leq R \leq C$, the optimal probability of error $P_{e,\max}^{(n)}(R)$ is known to have an exponential decrease in $n$, and it is thus useful to define the \emph{reliability function} of the channel as
\begin{equation}
E(R)=\limsup_{n\to\infty} -\frac{1}{n}\log P_{e,\max}^{(n)}(R).
\label{eq:E(R)_def_class}
\end{equation} 
The value $C_0$ is the so called \emph{zero-error capacity}, also introduced by Shannon \cite{shannon-1956}, which is defined as the highest rate at which communication is possible with probability of error precisely equal to zero. More formally,
\begin{equation}
C_0=\sup\{R \, :\,   P_{e,\max}^{(n)}(R)=0 \mbox{ for some } n\}.
\end{equation}
For $R<C_0$, we may define the reliability function $E(R)$ as being infinite.
Determining the reliability function $E(R)$ (at low positive rates) and the zero-error capacity $C_0$ of a general channel is still an unsolved problem.

\subsection{Reliability and zero-error capacity}
In order to study the zero-error capacity of a channel, it is important to consider when two input symbols or two input sequences are confusable and when they are not.
Note that two input symbols $x$ and $x'$ cannot be confused at the output if and only if the associated conditional distribution $W(\cdot|x)$ and $W(\cdot|x')$ have disjoint supports. Furthermore, two sequences $\mathbf{x}=(x_1,\ldots,x_n)$ and  $\mathbf{x'}=(x_1',\ldots,x_n')$ cannot be confused if and only if there exists at least one index $i$ such that symbols $x_i$ and $x_i'$ are not confusable. For a given channel $W$, it is then useful to define a \emph{confusability graph} $G(W)$ whose vertices are the elements of $\mathcal{X}$ and whose edges are the elements $(x,x')\in\mathcal{X}^2$ such that $x$ and $x'$ are confusable. It is then easily seen that $C_0$ only depends on $G(W)$. Furthermore, for any $G$, we can always find a channel $W$ such that $G(W)=G$. Thus, we may equivalently speak of the zero-error capacity of a channel $W$ or of the \emph{capacity $C(G)$ of the graph} $G$ if $G=G(W)$, and we will use those two notions interchangeably through the paper.
A first upper bound to $C_0$ was obtained by Shannon \cite{shannon-1956}, who upper bounded $C_0$ with the zero-error capacity $C_{FB}$ when perfect feedback is available. He could prove by means of a combinatorial argument that, if $C_0>0$, then 
\begin{equation}
C_{FB} = \max_{P} -\log \max_y \sum_{x: W(y|x)>0} P(x).
\label{eq:C_FB}
\end{equation}
Given a graph $G$, then, the best bound to $C(G)$ is obtained by using the channel $W'$ with $G(W')=G$ which minimizes $C_{FB}$. 
Interestingly enough, this bound can also be obtained by a rather different method that relies on bounding the reliability function $E(R)$. In particular, the so called sphere-packing bound, first derived in \cite{fano-book}  and later rigorously proved in \cite{shannon-gallager-berlekamp-1967-1}, states that $E(R)\leq E_{sp}(R)$, where $E_{sp}(R)$ is defined by 
\begin{align*}
E_{sp}(R) & \geq \sup_{\rho \geq 0 }\left[E_0(\rho)-\rho R \right]\\
E_0(\rho) & = \max_{P} E_0(\rho, P)\\
E_0(\rho,P) & = -\log \sum_{y}\left(\sum_{x} P(x) W(y|x)^{1/{(1+\rho)}}\right)^{1+\rho}.
\end{align*}

The function $E_{sp}(R)$ is finite for all rates $R$ larger than the quantity
\begin{equation}
R_\infty = \max_{P} -\log \max_y \sum_{x:\, W(y|x)>0} P(x),
\label{eq:Rinfty_class}
\end{equation}
which implies that $E(R)$ is finite for $R>R_\infty$ and thus that $C_0\leq R_\infty$. Interestingly enough, we see that if $C_0>0$ then $R_\infty=C_{FB}$.
This implies that in all cases of practical interest, Shannon's bound to $C_0$, which was first derived by means of a combinatorial method, can also be deduced from the sphere-packing bound, which is instead derived in a probabilistic setting.

A major breakthrough came with Lov\'asz's 1979 work \cite{lovasz-1979}. 
Given a confusability graph $G$, Lov{\'a}sz calls an \emph{orthonormal representation} of $G$ any set $\{u_x\}_{x\in\mathcal{X}}$ of unit norm vectors in any Hilbert space such that $u_x$ and $u_{x'}$ are orthogonal if symbols $x$ and $x'$ are not confusable. We will use here the \emph{bra-ket} notation $\braket{a}{b}$ for the scalar product between two vectors $a$ and $b$.
He then defines the \emph{value} of a representation $\{u_x\}$ as\footnote{We use a logarithmic version of the theta function so as to make its comparison with rates more straightforward.} 
\begin{equation}
V(\{u_x\})=\min_c \max_x \log \frac{1}{|\braket{u_x}{c}|^2},
\end{equation}
where the minimum is over all unit norm vectors $c$. The vector $c$ that achieves the minimum above is called the \emph{handle} of the representation. Lov\'asz shows that any orthonormal representation satisfies $V(\{u_x\})\geq C_0$. Optimizing over all representations, he thus gives a bound for $C_0$ in the form $C_0\leq \vartheta$, where
\begin{align*}
\vartheta& = \min_{\{u_x\}}\min_c \max_x \log \frac{1}{|\braket{u_x}{c}|^2}
\end{align*}
is the so called Lov\'asz theta function. This result is usually considered to be of a purely combinatorial nature and no probabilistic interpretation seems to have emerged up to now.  It is interesting to note, however, that
 a possible representation for the confusability graph of a channel $W$ can simply be constructed by taking the set of $|\mathcal{Y}|$-dimensional real valued vectors $\{\varphi_x\}$ with components $\varphi_x(y)=\sqrt{W(y|x)}$. As we will show later, the value of this representation $V(\{\varphi_x\})$ is precisely the cut-off rate of the channel, which is never smaller than $C_0$. Clearly, using different channels $W'$ (with $G(W')=G(W)$), we may upper bound $C_0$ with the lowest of their cut-off rates. Nicely enough, it turns out that this would lead precisely to the same upper bound obtained by means of $C_{FB}$ (or $R_\infty$). Lov{\'a}sz's theta function achieves a smaller upper bound to $C_0$ due to the fact that it allows the components of the vectors of a representation to take on negative values. Lov\'asz's approach seems thus to suggest bounding the zero-error capacity by considering the use of quantum-theoretic wave functions in place of classical probability distributions.

\subsection{R\'enyi's Information Radii}

It is known \cite{csiszar-korner-book} that the capacity of a classical channel can be written as an information radius according to the expression
\begin{equation}
C=\min_Q \max_x D(W(\cdot|x)||Q),
\label{eq:minmaxC}
\end{equation}
where $D(\cdot||\cdot)$ is the Kullback-Leibler divergence. This min-max formulation was extended by Csisz{\'a}r \cite{csiszar-1995} to describe the reliability function in the high rate region. Here, since we are only interested in upper bounds to $E(R)$, it is useful to consider the sphere-packing exponent $E_{sp}(R)$, for which Csisz{\'a}r's min-max expression holds with full generality. The function $E_{sp}(R)$ equals the upper envelope of all the lines $E_0(\rho)-\rho R$, and an important quantity is the value $R_\rho=E_0(\rho)/\rho$ at which each of these lines meets the $R$ axis.\footnote{Here, since we also consider the true zero-error capacity $C_0$, we do not adopt Csisz{\'a}r's notation of \emph{channel capacity of order $\alpha$}.}
Given two distributions $Q_1$ and $Q_2$ on the channel output $\mathcal{Y}$, define the \emph{R\'enyi divergence} of order $\alpha\in(0,1)$ of $Q_1$ from $Q_2$ as
\begin{equation}
D_\alpha(Q_1||Q_2)=\frac{1}{\alpha-1}\log \sum_{y} Q_1(y)^\alpha Q_2(y)^{1-\alpha}.
\end{equation}
It is then shown in \cite[Prop. 1]{csiszar-1995} that
\begin{equation}R_\rho=\min_{Q}\max_x D_\alpha(W(\cdot|x)||Q), \quad \alpha=1/(1+\rho).
\label{eq:minmaxRrho}
\end{equation}
Using the known properties of the R\'enyi divergence (see \cite{csiszar-1995}), we find that when $\rho\to 0$ the above expression (with $\alpha\to 1$) gives the already mentioned expression for the capacity \eqref{eq:minmaxC}, while for $\rho\to\infty$ we obtain 
\begin{equation}
R_\infty=\min_{Q}\max_x -\log \sum_{y:W(y|x)>0}Q(y),
\end{equation}
which is the dual formulation of \eqref{eq:Rinfty_class}.

It is evident that there is an interesting similarity between the min-max expression for $R_\rho$ of a channel $W$ and the value of a representation in Lov{\'a}sz' sense. In the next sections, we will show that this similarity is not a simple coincidence. Lov{\'a}sz' bound to $C_0$ and the sphere-packing bound to $E(R)$ are based on the very same idea and can be described in a unified way in probabilistic terms in the context  of quantum information theory. By considering the extension of the sphere-packing bound to classical-quantum channels, we will show that Lov{\'a}sz' bound emerges naturally, in that case, as a consequence of the bound $C_0\leq R_\infty$.

\begin{remark}

A very nice fact, apparently not reported in the literature, is that the usual cut-off rate of a classical channel $W$, evaluated according to equation \eqref{eq:minmaxRrho} with $\alpha=1/2$, is precisely the value $V(\{\varphi_x\})$ of the representation $\{\varphi_x\}$ with $\varphi_x=\sqrt{W(\cdot|x)}$. In this paper, however, we will interpret Lov{\'a}sz's value of a representation $\{u_x\}$ in relation to the rate $R_\infty$ of a pure-state classical-quantum channel with state vectors $\ket{u_x}$. It turns out \cite{dalai-QSP-2012} that the cut-off rate of a classical channel $W$ precisely equals the rate $R_\infty$ of a pure-state classical-quantum channel with state vectors $\ket{\varphi_x}$ as defined above, but the true reason for this equivalence is not yet clear.
\end{remark}

\section{Classical-Quantum Channels}
\label{sec:Basic-quantum}
\subsection{Basic notions and the sphere-packing bound}
We introduce here the minimal notions and results on classical-quantum channels so as to make this paper as self-contained as possible. The interested reader may refer to \cite{hayashi-book-2006} \cite{wilde-2012} for more details.

Following \cite{holevo-2000}, consider a classical-quantum channel with a finite input alphabet $\mathcal{X}$ with associated density operators $S_x$, $x\in\mathcal{X}$ in a finite dimensional Hilbert space\footnote{The $S_x$ can thus be represented as positive semi-definite Hermitian matrices with unit trace.} $\mathcal{H}$. The $n$-fold product channel acts in the tensor product space $\mathcal{H}^{\otimes n}$ of $n$ copies of $\mathcal{H}$. To a codeword $\mathbf{x}=(x_1,x_2,\ldots,x_n)$ is associated the signal state $\mathbf{S}_\mathbf{x}=S_{x_1}\otimes S_{x_2}\cdots\otimes S_{x_n}$.
A block code with $M$ codewords is a mapping from a set of $M$ messages $\{1,\ldots,M\}$ into a set of $M$ codewords  $\mathbf{x}_1,\ldots, \mathbf{x}_M$. The rate of the code is defined as
$R=\frac{\log M }{n}$.

A quantum decision scheme for such a code is a so-called POVM (see for example \cite{wilde-2012}), that is, a  collection of $M$ positive operators\footnote{The operators $\Pi_m$ can thus be represented as positive semi-definite matrices. The notation $\sum_m \Pi_m \leq \mathds{1}$ simply means that $\mathds{1}-\sum_m \Pi_m $ is positive semidefinite. Note that, by construction, all the eigenvalues of each operator $\Pi_m$ must be in the interval $[0,1]$.} $\{\Pi_1,\Pi_2,\ldots,\Pi_M\}$ such that $\sum \Pi_m\leq \mathds{1}$, where $\mathds{1}$ is the identity operator.
The probability that message $m'$ is decoded when message $m$ is transmitted is $P(m'|m)=\Tr \Pi_{m'} \mathbf{S}_{\mathbf{x}_m}$. 
The probability of error after sending message $m$ is
\begin{equation}
P_{e|m}=1-\Tr\left(\Pi_m \mathbf{S}_{\mathbf{x}_m}\right).
\end{equation}
We then define $P_{e,\max}$, $P_{e,\max}^{(n)}(R)$, $C$, $C_0$ and $E(R)$ precisely as in the classical case. 

As in the classical case, we can still express $C_0$ as the capacity of a confusability graph (see \cite{medeiros2006} for more general results) where, in this case, two input symbols are confusable if and only if $\Tr(S_x S_{x'})>0$. 
In fact, if a code with $M$ codewords satisfies $P_{e,\max}=0$, then for each $m\neq m'$ we must have $\Tr(\Pi_m \mathbf{S}_{\mathbf{x}_m})=1$ and $\Tr(\Pi_m  \mathbf{S}_{\mathbf{x}_{m'}})=0$. This is possible if and only if the signals $ \mathbf{S}_{\mathbf{x}_m}$ and $ \mathbf{S}_{\mathbf{x}_{m'}}$ are orthogonal, that is $\Tr(\mathbf{S}_{\mathbf{x}_m}\mathbf{S}_{\mathbf{x}_{m'}})=0$. But, using the property that $\Tr((A\otimes B)(C\otimes D))=\Tr(AC)\Tr(BD)$, we have
\begin{align}
\Tr(\mathbf{S}_{\mathbf{x}_m}\mathbf{S}_{\mathbf{x}_{m'}}) & =\prod_{i=1}^n\Tr(S_{x_{m,i}}S_{x_{m',i}}).
\end{align}
This implies that $\Tr(S_{x_{m,i}}S_{x_{m',i}})=0$ for at least one value of $i$.
Thus, evaluating the zero-error capacity in the classical-quantum setting amounts to evaluating the capacity of a graph as defined in the previous section. In this sense, there is no difference between classical and classical-quantum channels and, given a graph $G$, we can interpret the capacity $C(G)$ as either the zero error capacity $C_0$ of a classical or of a classical-quantum channel with that confusability graph. (For recent results on the zero-error communication via general quantum channels see \cite{duan-severini-winter-2013} and references therein).

For classical-quantum channels, bounds to the reliability function $E(R)$ have been developed which partially match those of the classical case. 
Lower bounds to the reliability function were obtained in \cite{burnashev-holevo-1998} and \cite{holevo-2000}, while upper bounds have remained relatively unexplored until recently. For general $R>0$, the first upper bound to $E(R)$ was obtained in \cite{dalai-ISIT-2012} as an extension of the classical sphere-packing bound of \cite{shannon-gallager-berlekamp-1967-1}. The bound can be stated as follows.

\begin{theorem}[Sphere Packing Bound \cite{dalai-ISIT-2012}\cite{dalai-QSP-2012}]
For all positive rates $R$ and all positive $\varepsilon < R$,
\begin{equation}
E(R)\leq E_{sp}(R-\varepsilon),
\end{equation}
where $E_{sp}(R)$ is defined by the relations
\begin{eqnarray}
E_{sp}(R) & = & \sup_{\rho \geq 0} \left[ E_0(\rho) - \rho R\right]\label{eq:Esp}\\
E_0(\rho) & = & \max_{P}E_0(\rho,P)\label{eq:E0rho}\\
E_0(\rho,P) & =& -\log\Tr\left( \sum_{x} P(x) S_x^{1/(1+\rho)} \right)^{1+\rho}.
\label{eq:E0rhoq}
\end{eqnarray}
\end{theorem}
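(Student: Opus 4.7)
My plan is to extend the classical Shannon–Gallager–Berlekamp proof to the classical-quantum setting, using a quantum Neyman–Pearson-type inequality in place of the classical Chernoff/Cramér change-of-measure bound. The overall template is: fix a code of rate $R$, reduce to codewords of a common composition, compare the true signal state of each codeword to a carefully chosen ``dummy'' product state $\sigma^{\otimes n}$, and then argue that at least one codeword must have an error probability lower bounded by $e^{-n E_{sp}(R-\varepsilon)}$ by combining a binary hypothesis-testing lower bound with the constraint $\sum_m \Pi_m \leq \mathds{1}$.

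First I would carry out the type reduction. The number of empirical compositions on $\mathcal{X}^n$ is polynomial in $n$, so any code of rate $R$ contains a sub-code of rate $R-\varepsilon/2$ (for $n$ large) in which every codeword has the same composition $P$. This reduces the problem to bounding $P_{e,\max}$ for a constant-composition code; the $\varepsilon$-slack in the theorem statement absorbs the resulting rate loss. I would also restrict attention to compositions $P$ for which the corresponding $E_0(\rho,P)$ is close to optimal, so that the bound obtained is $E_0(\rho)-\rho R$ up to vanishing terms, and then optimize over $\rho\geq 0$ at the end to obtain $E_{sp}(R-\varepsilon)$.

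The core quantum step is the hypothesis-testing lower bound. For a constant-composition sub-code of size $M'\approx e^{n(R-\varepsilon/2)}$ with POVM $\{\Pi_m\}$, and for a ``tilted'' single-letter state $\sigma$ built from $P$ and the parameter $s=1/(1+\rho)$ (essentially the state whose trace appears in $E_0(\rho,P)$), I would apply a quantum Neyman–Pearson / Nagaoka-type inequality to each pair $(\mathbf{S}_{\mathbf{x}_m},\sigma^{\otimes n})$. This yields an inequality of the form
\begin{equation*}
P_{e|m}\;+\;e^{nR'}\Tr(\Pi_m\sigma^{\otimes n})\;\geq\;e^{-n E_0(\rho,P)-o(n)},
\end{equation*}
valid for any $R'$ of the right order, through the quantum Rényi quantity $\Tr(\mathbf{S}_{\mathbf{x}_m}^{s}\sigma^{\otimes n\,(1-s)})$. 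Because $\sum_m \Pi_m\leq\mathds{1}$, the averages $(1/M')\sum_m \Tr(\Pi_m\sigma^{\otimes n})\leq 1/M'$ are small, so choosing $R'$ just below the rate allows us to conclude that at least one (and hence the maximum) $P_{e|m}$ satisfies the sphere-packing bound. Optimizing over $\rho$ and $P$ gives $E_{sp}(R-\varepsilon)$.

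The main obstacle is the quantum hypothesis-testing step: classically one uses a Cramér-style pointwise bound on the log-likelihood ratio, which has no direct analogue because $\mathbf{S}_{\mathbf{x}_m}$ and $\sigma^{\otimes n}$ do not commute. One must instead invoke a noncommutative inequality — the Nagaoka inequality, or an Audenaert–Nussbaum–Szkoła–Winter-type quantum Chernoff bound — and show it still produces the correct single-letter exponent $E_0(\rho,P)$ defined via $\Tr(\sum_x P(x)S_x^{1/(1+\rho)})^{1+\rho}$. The delicate point is matching the noncommutative exponent to the ``classical-looking'' $E_0(\rho,P)$ in \eqref{eq:E0rhoq}; this is exactly where the cited argument of \cite{dalai-ISIT-2012}\cite{dalai-QSP-2012} does the real work, and where I would expect the technical heart of the proof to lie.
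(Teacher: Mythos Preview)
The paper does not prove this theorem at all: it is stated as a known result, with the proof deferred entirely to the cited works \cite{dalai-ISIT-2012} and \cite{dalai-QSP-2012}. There is therefore no ``paper's own proof'' to compare your proposal against.

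That said, your sketch is a faithful high-level outline of the argument in those references: reduction to constant-composition codes, introduction of an auxiliary product state $\sigma^{\otimes n}$, a quantum binary-hypothesis-testing lower bound replacing the classical likelihood-ratio argument, and use of $\sum_m \Pi_m\leq \mathds{1}$ to force at least one large $P_{e|m}$. You are also right that the delicate step is the noncommutative hypothesis-testing inequality and its matching to the single-letter form of $E_0(\rho,P)$; in \cite{dalai-QSP-2012} this is handled via the Nussbaum--Szko\l a mapping together with a classical large-deviations bound, rather than directly via a Nagaoka-type inequality, so your pointer to ``Nagaoka or Audenaert--Nussbaum--Szko\l a--Winter'' is in the right neighborhood but not the exact tool used. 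As a proof proposal this is a reasonable roadmap, but since the present paper only quotes the result, there is nothing further to compare.
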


\subsection{Quantum R\'enyi's Information Radii}

We now extend Csisz{\'a}r's result to give a characterization of the sphere packing bound for classical-quantum channels in terms of R\'enyi's information measures. Given two density operators $F_1$ and $F_2$ in $\mathcal{H}$, and $\alpha\in (0,1)$, define the R\'enyi divergence of order $\alpha$ of $F_1$ from $F_2$ as
\begin{equation}
D_\alpha(F_1||F_2)=\frac{1}{\alpha-1}\log \Tr F_1^\alpha F_2^{1-\alpha}.
\end{equation}
As in the classical case, for $\rho>0$, let then
\begin{equation}
R_\rho=E_0(\rho)/\rho.
\end{equation}
Then we have the following result.
\begin{theorem}
\label{th:Rrho}
For a classical-quantum channel with states $S_x$. $x\in\mathcal{X}$ and $\rho>0$, the rate $R_\rho$ defined above satisfies
\begin{equation}
R_\rho=\min_{F}\max_{x} D_\alpha(S_x||F), \quad  \alpha=1/(1+\rho).
\end{equation} 
\end{theorem}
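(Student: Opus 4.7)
The plan is to establish $\min_F \max_x D_\alpha(S_x\|F)=R_\rho$ by matching lower and upper bounds, along the lines of Csisz\'ar's classical proof \cite[Prop.~1]{csiszar-1995}. The classical manipulations generalize once the trace H\"older inequality is used in place of its scalar version, so no quantum minimax theorem is needed.

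For the lower bound $\min_F \max_x D_\alpha(S_x\|F)\ge R_\rho$, I would fix any density operator $F$ and any distribution $P$ on $\mathcal{X}$, abbreviate $T(P):=\sum_x P(x)S_x^\alpha$, and chain two elementary steps. Concavity of $\log$ gives $\sum_x P(x)\log\Tr S_x^\alpha F^{1-\alpha}\le \log\Tr T(P)F^{1-\alpha}$, and the trace H\"older inequality with conjugate exponents $1/\alpha$ and $1/(1-\alpha)$, combined with $\Tr F=1$, gives $\Tr T(P)F^{1-\alpha}\le \bigl(\Tr T(P)^{1/\alpha}\bigr)^{\alpha}$. Multiplying by the negative factor $1/(\alpha-1)$ and using $\alpha/(\alpha-1)=-1/\rho$,
\[
\max_x D_\alpha(S_x\|F)\ge \sum_x P(x) D_\alpha(S_x\|F)\ge \frac{E_0(\rho,P)}{\rho}.
\]
Maximizing the rightmost side over $P$ gives $\max_x D_\alpha(S_x\|F)\ge R_\rho$ for every $F$, hence the claim.

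For the upper bound, I would exhibit a single density operator $F^\star$ with $\max_x D_\alpha(S_x\|F^\star)\le R_\rho$. Let $P^\star$ attain $E_0(\rho)=\max_P E_0(\rho,P)$ (existence by compactness) and define $T^\star:=T(P^\star)$, $Z:=\Tr(T^\star)^{1/\alpha}=e^{-E_0(\rho)}$, $F^\star:=(T^\star)^{1/\alpha}/Z$. Using $(1-\alpha)/\alpha=\rho$, a short rewriting gives
\[
D_\alpha(S_x\|F^\star)=R_\rho+\frac{1}{\alpha-1}\log\frac{\Tr S_x^\alpha(T^\star)^\rho}{Z},
\]
reducing the bound to $\Tr S_x^\alpha(T^\star)^\rho\ge Z$ for every $x\in\mathcal{X}$. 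This is precisely the first-order optimality condition for $P^\star$: since $1/\alpha>1$, the map $T\mapsto \Tr T^{1/\alpha}$ is $C^1$ on the positive semidefinite cone with Fr\'echet derivative $\alpha^{-1}T^\rho$, so $\partial_{P(x)}\Tr T(P)^{1/\alpha}=\alpha^{-1}\Tr S_x^\alpha T(P)^\rho$, and the simplex KKT conditions at $P^\star$ force $\Tr S_x^\alpha(T^\star)^\rho=Z$ on $\mathrm{supp}\,P^\star$ and $\ge Z$ elsewhere.

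The only place where non-commutativity genuinely enters is the H\"older estimate on $\Tr T(P)F^{1-\alpha}$; this is where the proof strictly requires a quantum ingredient, but the Schatten-norm inequality handles it cleanly. The KKT analysis itself proceeds exactly as in the classical case, because $1/\alpha>1$ keeps $\Tr T^{1/\alpha}$ differentiable up to the boundary of the positive semidefinite cone, so no approximation argument is needed even when $T^\star$ happens to be singular.
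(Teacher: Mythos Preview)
Your argument is correct, but it follows a genuinely different route from the paper's proof. The paper writes $R_\rho=\max_P\frac{1}{\alpha-1}\log\|A(\alpha,P)\|_{1/\alpha}$ with $A(\alpha,P)=\sum_x P(x)S_x^\alpha$, invokes the H\"older duality $\|A\|_{1/\alpha}=\max_{\|B\|_{1/(1-\alpha)}\le 1}\Tr(AB)$, and then \emph{swaps} the resulting $\min_P$ and $\max_B$ via a standard minimax theorem (linearity in both variables over convex sets). After the swap, the inner $\min_P$ collapses to $\min_x$, and the substitution $F=B^{1/(1-\alpha)}$ yields the claim directly. Your proof instead matches two one-sided bounds: the lower bound uses the trace H\"older inequality in the same spirit, but the upper bound avoids any minimax argument by explicitly exhibiting the optimizer $F^\star=(T^\star)^{1/\alpha}/Z$ and verifying $D_\alpha(S_x\|F^\star)\le R_\rho$ through the KKT conditions for $P^\star$. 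What you gain is constructiveness---you identify the optimal $F$ and recover the equality condition in H\"older---and independence from any saddle-point theorem, at the price of having to justify differentiability of $T\mapsto\Tr T^{1/\alpha}$ on the full positive semidefinite cone and to carry out the KKT computation. The paper's approach is shorter and hides the optimal $F$ inside the minimax swap; yours mirrors Csisz\'ar's classical argument more faithfully.
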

\begin{proof}
Setting $\alpha=1/(1+\rho)$, we can write
\begin{equation}
R_\rho =  \max_P \frac{1}{\alpha-1}\log\left[ \Tr\left( \sum_x P(x) S_x^{\alpha} \right)^{1/\alpha}\right]^{\alpha}
\end{equation}
and, defining $A(\alpha,P) =\sum_x P(x) S_x^{\alpha}$, we can write
\begin{equation}
R_\rho =  \max_P \frac{1}{\alpha-1}\log \| A(\alpha,P) \|_{1/\alpha},
\label{eq:rrhoins}
\end{equation}
where $\|\cdot\|_{r}$ is the Schatten $r$-norm. From the H\"older inequality we know that, for any  positive operators $A$ and $B$, we have
\begin{equation}
\|A\|_{1/\alpha}\|B\|_{1/(1-\alpha)}\geq \Tr(AB)
\end{equation}
with equality if an only if $B=\gamma A^{1-1/\alpha}$ for some scalar coefficient $\gamma$. Thus we can write
\begin{equation}
\|A\|_{1/\alpha}=\max_{\|B\|_{1/(1-\alpha)}\leq 1} \Tr(AB),
\end{equation}
where $B$ runs over positive operators in the unit ball in the $(1/(1-\alpha))$-norm.
Using this expression for the Schatten norm we obtain
\begin{align}
R_\rho & =  \max_P  \frac{1}{\alpha-1}\log \max_{\|B\|_{1/(1-\alpha)}\leq 1} \Tr( A(\alpha,P) B)\\
& =\frac{1}{\alpha-1}\log  \min_P \max_{\|B\|_{1/(1-\alpha)}\leq 1} \Tr\left( \sum_x P(x) S_x^{\alpha} B \right).
\end{align}
In the last expression, the minimum and the maximum are both taken over convex sets and the objective function is linear both in $P$ and $B$. Thus, we can interchange the order of maximization and minimization to get
\begin{align}
R_\rho & = \frac{1}{\alpha-1}\log \max_{\|B\|_{1/(1-\alpha)}\leq 1}\min_P  \sum_x P(x) \Tr\left(S_x^{\alpha} B \right) \\
&= \frac{1}{\alpha-1}\log\max_{ \|B\|_{1/(1-\alpha)}\leq 1}\min_x  \Tr\left(S_x^{\alpha} B \right).
\end{align}
Now, we note that the maximum over $B$ can always be achieved by a positive operator, since all the $S_x^\alpha$ are positive operators. Thus, we can change the dummy variable $B$ with $F=B^{1/(1-\alpha)}$, where $F$ is now a positive operator constrained to satisfy $\|F\|_1\leq 1$, that is, it is a density operator. Using $F$, we get
\begin{align}
R_\rho & = \frac{1}{\alpha-1}\log \max_{F}\min_{x}  \Tr\left(S_x^{\alpha} F^{1-\alpha} \right) \\
& =  \min_{F}\max_{x} \frac{1}{\alpha-1} \log \Tr\left(S_x^{\alpha} F^{1-\alpha} \right)\\
& =  \min_{F}\max_{x} D_\alpha (S_x||F).
\end{align}
where $F$ now runs over all density operators. 
\end{proof}

It is obvious that, if all operators $S_x$ commute, which means that the channel is classical, than the optimal $F$ is diagonal in the same basis where the $S_x$ are, and we thus recover Csisz\'ar's expression for the classical case. Furthermore, for $\rho\to 0$ (that is, $\alpha\to 1$) we obtain the expression of the capacity as an information radius 
already established for classical-quantum channels \cite{hayashi-nagaoka-2003}. When $\rho=1$ (that is, $\alpha=1/2$) then, we obtain an alternative expression for the so called quantum cut-off rate \cite{ban-kurokawa-hirota-1998}. 
The most important case in our context, however, is the case when $\rho\to\infty$ (that is, $\alpha\to 0$).
Taking the limit in Theorem \ref{th:Rrho}, letting $S_x^0$ be the projector in the subspace of $S_x$, we obtain
\begin{equation}
R_\infty=\min_{F}\max_{x} \log\frac{1}{\Tr\left(S_x^0 F \right)},
\label{eq:minmaxQRinfty}
\end{equation}
where the minimum is again over all density operators $F$. Note that the argument of the min-max in \eqref{eq:minmaxQRinfty} coincides with $D_{\min}(S_x|| F)$ according to the definition of $D_{\min}$ introduced in \cite{datta2009}.

The analogy with the Lov{\'a}sz theta function becomes evident if we consider a special case of \eqref{eq:minmaxQRinfty}. 
Assume that the states $S_x$ are pure and set $S_x=\ket{u_x}\bra{u_x}$. Consider for a moment the search for the optimum $F$ when restricted to rank-one operators, that is $F=\ket{f}\bra{f}$. We see that in this case we can write $\Tr(S_x^0F)=|\braket{u_x}{f}|^2$. When searching over all possible $F$, we thus find that for this channel we have
\begin{equation}
R_\infty\leq V(\{u_x\}).
\label{eq:RinftyvsV}
\end{equation}
Hence, we see that Lov{\'a}sz's bound $C_0\leq V(\{u_x\})$ can be deduced as a consequence of $C_0\leq R_\infty$. For a given graph $G$, one may want to bound $C(G)$ with the smallest $R_\infty$ over all channels with confusability graph $G$. This is discussed in the next section section.
%

\section{Sphere Packing and the Lov\'asz Theta Function}

For a given confusability graph $G$, inspired by \eqref{eq:minmaxQRinfty},
we define a \emph{representation} of $G$ any set of projectors $\{U_x\}$ such that $U_x U_{x'}=0$ if symbols $x$ and $x'$ cannot be confused. Furthermore, we introduce an alternative definition of \emph{value}
\begin{equation}
V_{sp}(\{U_x\})=\min_{F}\max_x \log\frac{1}{\Tr\left(U_x F \right)},
\end{equation}
where the minimum is over all density operators $F$. The optimal $F$ will be called again the handle of the representation. We can then finally define the quantity.
\begin{equation}
\vartheta_{sp}=\min_{\{U_x\}}\min_{F}\max_{x} \log\frac{1}{\Tr\left(U_x F \right)},
\end{equation}
where $\{U_x\}$ runs over all representations of the graph $G$. We then have the following result.
\begin{theorem}
\label{th:mytheta}
For any graph, we have
\begin{equation}
C(G)\leq \vartheta_{sp}\leq \vartheta.
\end{equation}
\end{theorem}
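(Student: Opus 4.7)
The plan is to chain together the sphere-packing bound for classical-quantum channels with the min-max characterization \eqref{eq:minmaxQRinfty} of $R_\infty$, applying the construction once to get $C(G) \leq \vartheta_{sp}$ and then comparing the two notions of representation to get $\vartheta_{sp} \leq \vartheta$.

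For the first inequality, I would start from an arbitrary representation $\{U_x\}$ of $G$ and construct the classical-quantum channel with input alphabet $\mathcal{X}$ and states $S_x = U_x/\Tr(U_x)$. Since each $U_x$ is a projector, it coincides with its own support, so $S_x^0 = U_x$. Because $\Tr(U_x U_{x'}) = \Tr(U_{x'} U_x U_{x'})$ is a squared Frobenius norm that vanishes iff $U_x U_{x'} = 0$, the confusability graph $G(W)$ of this channel is a subgraph of $G$: every non-edge of $G$ is also a non-edge of $G(W)$, but $G(W)$ may have strictly fewer edges. Removing edges can only enlarge the family of admissible zero-error codes, so $C(G) \leq C_0(W)$. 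The sphere-packing bound (via the fact that $E_{sp}(R)<\infty$ for $R>R_\infty$) gives $C_0(W)\leq R_\infty(W)$, and \eqref{eq:minmaxQRinfty} then rewrites
\begin{equation*}
R_\infty(W) \;=\; \min_F \max_x \log\frac{1}{\Tr(S_x^0 F)} \;=\; \min_F \max_x \log\frac{1}{\Tr(U_x F)} \;=\; V_{sp}(\{U_x\}).
\end{equation*}
Minimizing over representations yields $C(G)\leq \vartheta_{sp}$.

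For the second inequality, every Lovász orthonormal representation $\{u_x\}$ induces a projector representation $U_x=\ket{u_x}\bra{u_x}$, since $\braket{u_x}{u_{x'}}=0$ for non-confusable pairs implies $U_x U_{x'}=0$. Restricting the minimization in $V_{sp}(\{U_x\})$ to pure density operators $F=\ket{c}\bra{c}$ is a constraint that can only increase the value of the minimum, and on such $F$ one has $\Tr(U_x F)=|\braket{u_x}{c}|^2$; hence
\begin{equation*}
V_{sp}(\{U_x\}) \;\leq\; \min_c\max_x \log\frac{1}{|\braket{u_x}{c}|^2} \;=\; V(\{u_x\}).
\end{equation*}
Taking the minimum over all Lovász representations gives $\vartheta_{sp}\leq\vartheta$.

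The one delicate point, and the step that most deserves care, is the direction of the subgraph inclusion in the first part: passing from $\{U_x\}$ to the channel $\{S_x\}$ may \emph{remove} edges of $G$ (whenever the representation forces $U_xU_{x'}=0$ for a pair that is confusable in $G$), so $C_0(W)$ is in general \emph{larger} than $C(G)$, which is exactly the direction needed for the chain of inequalities to point the right way. Once this is handled, everything else reduces to invoking results already established in the paper.
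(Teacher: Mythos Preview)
Your argument is correct. For $\vartheta_{sp}\leq\vartheta$ you do exactly what the paper does. For $C(G)\leq\vartheta_{sp}$, however, you take a different route from the paper's main proof.

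The paper acknowledges that $C(G)\leq\vartheta_{sp}$ ``should be clear in light of the above discussion on the bound $E(R)\leq E_{sp}(R)$'' --- which is precisely your approach --- but then chooses to present a \emph{self-contained} Lov\'asz-style argument instead: for an optimal representation $\{U_x\}$ with handle $F$, the tensor-product projectors $\mathbf{U}_{\mathbf{x}_m}$ associated to the codewords of any zero-error code are mutually orthogonal, hence $\sum_m \mathbf{U}_{\mathbf{x}_m}\leq\mathds{1}$; tracing against $F^{\otimes n}$ and using $\Tr(U_xF)\geq e^{-\vartheta_{sp}}$ for each factor gives $1\geq M e^{-n\vartheta_{sp}}$. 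Your proof instead builds a classical-quantum channel from the representation, invokes the sphere-packing bound (Theorem~1) to get $C_0\leq R_\infty$, and then uses the min-max formula \eqref{eq:minmaxQRinfty} to identify $R_\infty$ with $V_{sp}(\{U_x\})$. Your handling of the subgraph direction is correct and is indeed the point that needs care. What your approach buys is that it makes the paper's central thesis---that Lov\'asz's bound is literally a corollary of sphere-packing---completely explicit; what the paper's direct proof buys is that it is elementary and does not rest on the much heavier Theorem~1.
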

\begin{proof}
The fact that $\vartheta_{sp}\leq \vartheta$ is obvious, since Lov\'asz's $\vartheta$ is obtained by restricting the minimization in the definition of $\vartheta_{sp}$ to rank-one projectors $U_x=\ket{u_x}\bra{u_x}$ and handle $F=\ket{f}\bra{f}$. That $C_0\leq \vartheta_{sp}$ should be clear in light of the above discussion on the bound $E(R)\leq E_{sp}(R)$. It is instructive, however, to present a self-contained proof along the same argument used by Lov\'asz.

Consider an optimal representation $\{U_x\}$ and, to a sequence of symbols $\mathbf{x}=(x_1,x_2,\ldots,x_n)$, associate the operator (projector) $\mathbf{U}_\mathbf{x}=U_{x_1}\otimes U_{x_2}\cdots\otimes U_{x_n}$.
Consider then a zero-error code with $M$ codewords of length $n$, $\mathbf{x}_1,\ldots,\mathbf{x}_M$, and their associated projectors $\mathbf{U}_{\mathbf{x}_1},\ldots,\mathbf{U}_{\mathbf{x}_M}$. 
Then, as proved before, for $m\neq m'$ we have $\Tr(\mathbf{U}_{\mathbf{x}_m}\mathbf{U}_{\mathbf{x}_{m'}})=0$.
Hence, since the states $\{\mathbf{U}_{\mathbf{x}_m}\}$ are orthogonal projectors, we clearly have
\begin{equation}
\sum_{m=1}^{M}\mathbf{U}_{\mathbf{x}_m}\leq \mathds{1},
\label{eq:subpartition}
\end{equation}
where $\mathds{1}$ is the identity operator. Consider now the state $\mathbf{F}=F^{\otimes n}$ where $F$ is the handle of the representation $\{U_x\}$. Note that, for each $m$, we have
\begin{align*}
\Tr (\mathbf{U}_{\mathbf{x}_m} \mathbf{F}) & = \prod_{i=1}^n \Tr (U_{x_{m,i}} F)\\
& \geq e^{-n \vartheta_{sp}}.
\end{align*}
So, using \eqref{eq:subpartition}, we deduce that 
\begin{eqnarray}
1 & \geq &  \sum_{m=1}^M \Tr (\mathbf{U}_{\mathbf{x}_m} \mathbf{F})\\ 
& \geq & Me^{-n \vartheta_{sp}}.
\end{eqnarray}
and hence that $M\leq e^{n\vartheta_{sp}}$.
\end{proof}

\textbf{Note added in the final version:} 
Schrijver \cite{schrijver2013} has observed that Lemma 4 and Corollary 1 in \cite{lovasz-1979}  apply \emph{mutatis mutandis} with our definitions of representation and of $\vartheta_{sp}$. Then, Theorem 5 in \cite{lovasz-1979}  implies $\vartheta\leq \vartheta_{sp}$, proving that $\vartheta_{sp}=\vartheta$.
This conclusively shows that the sphere-packing bound, when applied to classical-quantum channels, gives precisely Lov\'asz' bound to $C_0$ and that pure state channels suffice for this purpose.
This also implies that for Lov\'asz's optimal representations there is always a rank-one minimizing $F$  in \eqref{eq:minmaxQRinfty}. It is worth pointing out that this is not true in general and that strict inequality holds in \eqref{eq:RinftyvsV} for some channels.


\bibliographystyle{IEEEtran}

\end{document}